\documentclass[conference]{IEEEtran}

\IEEEoverridecommandlockouts

\usepackage{graphicx}
\usepackage{amsmath}
\usepackage{amssymb}
\usepackage{amsthm}
\usepackage{color}
\usepackage{bm}
\usepackage{comment}
\usepackage{cite}
\usepackage{url}
\usepackage{array}
\usepackage{multicol}
\usepackage{subcaption}

\usepackage{todonotes}
\setlength{\marginparwidth}{1.3cm}

\usepackage{scalerel,stackengine}
\stackMath
\newcommand\reallywidehat[1]{%
\savestack{\tmpbox}{\stretchto{%
  \scaleto{%
    \scalerel*[\widthof{\ensuremath{#1}}]{\kern-.6pt\bigwedge\kern-.6pt}%
    {\rule[-\textheight/2]{1ex}{\textheight}}
  }{\textheight}%
}{0.5ex}}%
\stackon[1pt]{#1}{\tmpbox}%
}

\theoremstyle{plain}
\newtheorem{proposition}{Proposition}

\theoremstyle{remark}
\newtheorem{cons}{Corollary}

\usepackage{listofitems}
\usepackage{algorithm}
\usepackage{algpseudocode}
\algnewcommand\algorithmicinput{\textbf{Input:}}
\algnewcommand\INPUT{\item[\algorithmicinput]}
\algnewcommand\algorithmicoutput{\textbf{Output:}}
\algnewcommand\OUTPUT{\item[\algorithmicoutput]}

\newcommand{\cV}{{\cal V}}

\floatname{algorithm}{Algorithm}

\algrenewcommand\algorithmiccomment[1]{\{#1\}}
\algnewcommand{\To}{\textbf{to}\ }

\usepackage{mathtools}

\graphicspath{{figures/}}

\newtheorem{definition}{Definition}

\newtheorem{theorem}{Theorem}

\newtheorem{remark}{Remark}

\newcommand{\rank}{\mathrm{rank}\kern 2pt}






\renewcommand{\epsilon}{\varepsiolon}


\hyphenation{}

\def\BibTeX{{\rm B\kern-.05em{\sc i\kern-.025em b}\kern-.08em
    T\kern-.1667em\lower.7ex\hbox{E}\kern-.125emX}}

\pagenumbering{gobble}

\begin{document}
\title{ Probabilistic Examination of Least Squares Error \\ in Low-bitwidth Cholesky Decomposition
\thanks{The research was carried out at Skoltech and supported by the Russian Science Foundation (project no. 24-29-00189).
}}

\author{\IEEEauthorblockN{1\textsuperscript{st} Alexander Osinsky}
\IEEEauthorblockA{
\textit{Skoltech}\\
Moscow, Russia \\
A.Osinskiy@skoltech.ru}
\and
\IEEEauthorblockN{2\textsuperscript{nd} Roman Bychkov}
\IEEEauthorblockA{
\textit{Skoltech}\\
Moscow, Russia \\
R.Bychkov@skoltech.ru}
\and
\IEEEauthorblockN{3\textsuperscript{rd} Mikhail Trefilov}
\IEEEauthorblockA{
\textit{MIPT, Phystech}\\
Moscow, Russia \\
trefilov.mp@phystech.edu} 
\and 
\IEEEauthorblockN{4\textsuperscript{th} Vladimir Lyashev}
\IEEEauthorblockA{
\textit{MIPT, Phystech}\\
Moscow, Russia \\
lyashev.va@mipt.ru}
\and
\IEEEauthorblockN{5\textsuperscript{th} Andrey Ivanov}
\IEEEauthorblockA{
\textit{Skoltech}\\
Moscow, Russia \\
AN.Ivanov@skoltech.ru}
}


%

\maketitle

\begin{abstract}

In this paper, we propose a new approach to justify a round-off error impact on the accuracy of the linear least squares (LS) solution using Cholesky decomposition. This decomposition is widely employed to inverse a matrix in the linear detector of the Multi-User multi-antenna receiver. 
The proposed stochastic bound is much closer to actual errors than other numerical bounds. It was tested with a half-precision format and validated in realistic scenarios. Experimental results demonstrate our approach predicts errors very close to those achieved by simulations. The proposed approach can be employed to analyze the resulting round-off error in many other applications.


 \vskip 0.2cm

\begin{keywords}
Round-off error, Cholesky decomposition.
\end{keywords}

\end{abstract}

\section{Introduction}
\label{sec:introduction}

\IEEEPARstart Future generation wireless communications face the issue of an extremely high computational complexity due to the increasing number of antennas as well as the number of users in Massive Multiple Input, Multiple Output (MIMO) systems \cite{shafi20175g}. The excessive complexity of the receiver \cite{MMSEmatrixmethods} stands as a major issue. One of the operations that require the most computing is MIMO detection \cite{Studer}. While the number of antennas in a standard 4G MIMO is limited by $N^{4G}_{RX}=8$, in massive MIMO it begins at $N^{5G}_{RX}=64$ \cite{shafi20175g}. Thus, the antenna domain complexity of the linear least squares (LS) detector \cite{high_perform} in 5G is about
\[
[{N^{5G}_{RX}}/{N^{4G}_{RX}}]^3=512
\]
times greater than in 4G. Considering the Cholesky decomposition has the lowest complexity, there is an acute problem of further lowering its complexity.

A feasible method to further lower the complexity of Cholesky decomposition is to decrease its computational precision. Using, e.g. half precision, allows reducing computational cost several times compared to single precision \cite{FloatOld}. Unfortunately, the application of low precision numbers can lead to significant accuracy degradation \cite{fixed}, when the condition number of the transformation is high \cite{OsinskyRegularization}. Thus, the investigation of low precision arithmetic impact on the resulting error is rather important. 

Moreover, if the resulting round-off error of the MIMO detector \cite{cloud, high_perform} is expected to be less than, for example, the channel estimation error \cite{Theory_bound, Efficient, Lower_bound}, the detector bitwidth can be decreased.

To the best of our knowledge, there is very limited research on this topic. Thus, existing bounds of Cholesky decomposition accuracy mostly employ numerical analysis to derive an upper bound of the resulting round-off error. Such bound is derived for the worst case and much exceeds practical values. For example, it can demand us to use several bits higher mantissa length than we need, thus lowering all efforts on bitwidth minimization. As a result, the predicted error is overestimated, and the bound is not very useful in the problem we are solving.  




\subsection{Problem definition}

We are interested in solving the linear least squares (LS) problem
\begin{equation}\label{eq:HXY}
  \left\| HX - Y \right\|_2 \to \min,
\end{equation}
with $X$ being an exact solution of $HX=Y$. Here $X \in \mathbb{C}^N$ is the transmitted signal, $Y \in \mathbb{C}^M$ is the received signal, and $H \in \mathbb{C}^{M \times N}$ is the channel matrix. In practice signal $Y$ also contains noise, but here we are only interested in the round-off error: if the white noise is assumed to be independent of the round-off, we can add its corresponding error independently. The problem \eqref{eq:HXY} is equivalent to
\[
  H^H H X = H^H Y
\]
or
\[
  X = \left( H^H H \right)^{-1} H^H Y
\]
with $H$ having full column rank. 

This problem can be solved by various QR decompositions like Modified Gramm-Schmidt or Householder methods. However, the fastest way is to use Cholesky decomposition $L L^H = H^H H$ or Cholesky QR, like in \cite{CholeskyQR3} (Cholesky QR is computing $Q = H L^{-H}$ with $L$ from Cholesky). After factors $Q$ and $R$ are obtained, we compute
\[
  W = H^+ = R^{-1} Q^H
\]
and then
\begin{equation}\label{eq:WY}
  X = H^+ Y = WY.
\end{equation}
Weight matrix $W$ is stored directly instead of computing $Q^H Y$ and then using back substitution, because of practical applications of this problem: if $H$ is a channel matrix, and we want to compute the LS detector for multiple right-hand sides and multiple channels, it is advantageous to do interpolation of $W$ instead of computing it every time for slightly different channels $H$.

Here we are assuming no interference scenario and neglect the noise power regularization. In general, some interference correlation matrix $R_{UU}$ should be used to account for interference, which is also decomposed using Cholesky factorization. Round-off errors due to Cholesky decomposition of $R_{UU}$ in high interference environment were already studied in \cite{OurCholeskyRoundoff}. Here we instead study the Cholesky decomposition of $H^H H$ and its round-off error, which is going to be much more important in case of low interference.

\subsection{Contribution}

In this letter, we propose a probabilistic bound for the resulting round-off error of the Cholesky-based linear LS solution implemented in low-precision arithmetic. Compared to existing numerical bounds, this bound is much closer to the practical error.

Our results may help to predict the minimum required precision for the arithmetic operations, involved in linear LS, which do not yet lead to the loss of performance. Our approach can also potentially help to predict, which parts of the algorithm should use higher precision.

\section{Known results}

The most known bound for Cholesky decomposition is the following theorem.

\begin{theorem}[\cite{accuracy}, Theorem 10.3]\label{th:CholeskyOld}
If Cholesky decomposition of $A \in \mathbb{R}^{N \times N}$ with rounding to the nearest runs to completion and outputs some factors $\tilde L$ and $\tilde L^H$, then
\[
  \tilde L \tilde L^H = A + \Delta A, 
\]  
\[ \left| \Delta A \right| \leqslant \left( N + 1 \right) u \left| L \right| \left| L^H \right| + O \left( u^2 \right),
\]
where $u = 2^{-b-1}$, $b$ is the number of mantissa bits.
\end{theorem}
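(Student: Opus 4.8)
The plan is to run a standard componentwise backward error analysis of the scalar recurrences that define the Cholesky factor and then assemble the resulting scalar bounds into the stated matrix inequality. First I would fix the floating-point model: every elementary operation obeys $\mathrm{fl}(x\,\mathrm{op}\,y)=(x\,\mathrm{op}\,y)(1+\delta)$ with $|\delta|\le u$, and the square root obeys $\mathrm{fl}(\sqrt{x})=\sqrt{x}\,(1+\delta)$ with $|\delta|\le u$. It is convenient to write $\gamma_n=nu/(1-nu)=nu+O(u^2)$ and to use the standard fact that any product of at most $n$ factors of the form $(1+\delta_i)^{\pm 1}$ equals $1+\theta_n$ for some $\theta_n$ with $|\theta_n|\le\gamma_n$. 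This collapses all accumulated relative errors into a single scalar that I can propagate through the algorithm.

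Second, I would analyze the two recurrences generating column $j$ of $\tilde L$. The off-diagonal entry is evaluated as $\tilde l_{ij}=\mathrm{fl}\bigl((a_{ij}-\sum_{k=1}^{j-1}\tilde l_{ik}\tilde l_{jk})/\tilde l_{jj}\bigr)$ and the diagonal entry as $\tilde l_{jj}=\mathrm{fl}\bigl(\sqrt{a_{jj}-\sum_{k=1}^{j-1}\tilde l_{jk}^{\,2}}\bigr)$. Applying the inner-product error bound to the sum and accounting for the subtraction from $a_{ij}$ (resp.\ $a_{jj}$) together with the final division (resp.\ square root), each evaluation rearranges into an exact identity $(\tilde L\tilde L^H)_{ij}=a_{ij}+\Delta a_{ij}$ in which the perturbation satisfies the componentwise bound $|\Delta a_{ij}|\le\gamma_{j+1}\sum_{k=1}^{j}|\tilde l_{ik}|\,|\tilde l_{jk}|$. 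The essential bookkeeping point is that at most $j+1\le N+1$ rounded operations feed into entry $(i,j)$, so that $\gamma_{j+1}\le\gamma_{N+1}$ holds uniformly over the matrix.

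Third, I would pass to matrix form. Since $\sum_{k=1}^{j}|\tilde l_{ik}|\,|\tilde l_{jk}|=(|\tilde L|\,|\tilde L^H|)_{ij}$, the scalar estimates combine into $|\Delta A|\le\gamma_{N+1}\,|\tilde L|\,|\tilde L^H|$, the componentwise bound phrased in terms of the computed factor. To obtain the stated form with the exact factor, I would use that the computed factor differs from the exact one only to first order, $|\tilde L|\,|\tilde L^H|=|L|\,|L^H|+O(u)$, and expand $\gamma_{N+1}=(N+1)u+O(u^2)$; the cross term $(N+1)u\cdot O(u)$ is absorbed into $O(u^2)$, giving $|\Delta A|\le(N+1)u\,|L|\,|L^H|+O(u^2)$.

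I expect the main obstacle to be the precise bookkeeping in the second step: counting exactly which rounding operations contribute to each entry so that the constant emerges as $N+1$ and no larger, and checking that the subtraction, the division by $\tilde l_{jj}$, and the square root each introduce only a single additional $(1+\delta)$ factor. The final transfer from computed to exact factors is routine once the first-order perturbation of $\tilde L$ is recorded, and it is precisely this transfer that produces the $O(u^2)$ remainder in the statement.
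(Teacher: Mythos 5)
This theorem is imported verbatim from the cited reference (Higham, \emph{Accuracy and Stability of Numerical Algorithms}, Thm.~10.3) and the paper supplies no proof of its own, so the only meaningful comparison is with that source: your outline reproduces exactly the standard componentwise backward-error argument used there (the $\gamma_n$ calculus applied to the two defining recurrences, the bound $|\Delta a_{ij}|\le\gamma_{j+1}\bigl(|\tilde L|\,|\tilde L^H|\bigr)_{ij}$ with $\gamma_{j+1}\le\gamma_{N+1}$, and assembly into matrix form). The sketch is correct; the one step worth writing out explicitly is the final replacement $|\tilde L|\,|\tilde L^H|=|L|\,|L^H|+O(u)$, which relies on the (true but not free) fact that for a fixed positive definite $A$ the computed factor converges to the exact one as $u\to 0$, and which is precisely what generates the $O(u^2)$ remainder in the form stated here.
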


Here $\left| A \right|$ denotes the matrix with absolute values of the elements of $A$. Corresponding matrix inequalities are elementwise:
\[
  \left| A \right| \leqslant \left| B \right| \Leftrightarrow \left| A_{ij} \right|^2 \leqslant \left| B_{ij} \right|^2.
\]

In spectral norm, Theorem \ref{th:CholeskyOld} leads to 
\[
\left\| \Delta A \right\|_2 \leqslant \left( N + 1 \right) N u \left\| A \right\|_2 + O \left( u^2 \right)
\]
and in the Frobenius norm to 
\[
\left\| \Delta A \right\|_F \leqslant \left( N + 1 \right) \sqrt{N} \left\| A \right\|_F + O \left( u^2 \right).
\]
Either way, the dependence on size $N$ is too significant. As we will see, it would lead to a much larger error, than observed. Our goal will be to prove a better bound (although with some less rigorous assumptions) and use it to estimate the round-off error in the LS detector.

Next we always omit $O \left( \varepsilon^2 \right)$ terms, where $\varepsilon = u / \sqrt{3}$ is the scaled machine precision. We scale it so that $\varepsilon^2$ is the average squared round-off error in case of rounding to nearest and uniform distribution of the truncated part. So, 
\[
\varepsilon \leqslant 2^{-11}/\sqrt{3}
\]
for half precision, and 
\[
\varepsilon \leqslant 2^{-24}/\sqrt{3}
\]
for single precision. Although this assumption about round-off distribution is not true in general, it is usually close to the truth and is often used, for example, when estimating quantization errors \cite{QuantBook}. Rounding errors are also assumed to be independent random numbers, although, again, strictly they are not independent.

We use $\sim$ or $\lesssim$ symbols to omit expectations, constant factors, and lower-order terms. In particular, for any matrices $A$ and $B$ of the same size by definition
\[
  \left\| A \right\|_F \lesssim \left\| B \right\|_F \Leftrightarrow \mathbb{E} \left\| A \right\|_F^2 \leqslant {\rm const} \cdot \mathbb{E} \left\| B \right\|_F^2.
\]

The rounding errors (i.e., errors coming from inexact operations and inexact representation of numbers) in $A$ will be denoted by $\Delta A$, and 
\[
\tilde A = A + \Delta A
\]
is the inexact version of $A$. This includes all errors coming from any operations involved in the computation of $A$, as well as any introduced backward errors (like in Theorem \ref{th:CholeskyOld}).

In \cite{HighamPHD} randomized approach was used to calculate errors of QR decomposition, using the following theorem, which describes the round-off error of an arbitrary scalar product. Here we simplify the formulation using our notation.

\begin{theorem}[ \cite{HighamPHD}]\label{th:scalar_orig}
Let $a,b \in \mathbb{C}^N$, then, under the assumption of random independent round-off errors,
\begin{equation}\label{eq:scalar_eq}
  \left| \Delta \left( a^H b \right) \right| \lesssim \sqrt{N} \varepsilon \left\| a \right\|_2 \left\| b \right\|_2.
\end{equation}
\end{theorem}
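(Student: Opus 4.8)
The plan is to track the round-off through the natural recursive evaluation of $a^H b = \sum_{i=1}^N \bar a_i b_i$ and to exploit the fact that, because the individual rounding errors are assumed independent and zero-mean, their \emph{variances} add rather than their magnitudes; this is exactly what replaces the deterministic factor $N$ by the stochastic factor $\sqrt N$. First I would fix the evaluation order (sequential accumulation), write each computed product as $\tilde p_i = p_i(1+\delta_i)$ with $p_i = \bar a_i b_i$, and each partial-sum update as a multiplication by $(1+\eta_k)$, where $\delta_i$ (from the $i$-th multiplication) and $\eta_k$ (from the $k$-th addition) are independent with $\mathbb{E}\delta_i = \mathbb{E}\eta_k = 0$ and variance $\varepsilon^2$. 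Expanding the recursion and discarding the $O(\varepsilon^2)$ terms, as already agreed in the text, yields a representation that is linear in the rounding errors,
\[
\Delta\!\left(a^H b\right) = \sum_{i=1}^N p_i\,\delta_i + \sum_{k=2}^N \eta_k\, S_k, \qquad S_k = \sum_{i=1}^k p_i,
\]
so the total error is a linear combination of at most $2N-1$ independent zero-mean rounding errors, weighted by the products $p_i$ and the partial sums $S_k$.

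The core of the argument is then a second-moment computation. Since all the $\delta_i,\eta_k$ are independent and zero-mean, the cross terms vanish and
\[
\mathbb{E}\left|\Delta\!\left(a^H b\right)\right|^2 = \varepsilon^2\left(\sum_{i=1}^N |p_i|^2 + \sum_{k=2}^N |S_k|^2\right).
\]
Here I would bound $|p_i| = |a_i|\,|b_i|$ and, crucially, estimate each partial sum by the triangle inequality $|S_k|\le\sum_{i=1}^N|a_i|\,|b_i|$. Each of the $2N-1$ summands is then at most $\big(\sum_i|a_i||b_i|\big)^2$, and a single application of Cauchy--Schwarz gives $\sum_i|a_i||b_i|\le\|a\|_2\|b\|_2$. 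Hence $\mathbb{E}|\Delta(a^H b)|^2 \le (2N-1)\varepsilon^2\|a\|_2^2\|b\|_2^2$, which is precisely the claim $|\Delta(a^H b)|\lesssim\sqrt N\,\varepsilon\|a\|_2\|b\|_2$ under the $\lesssim$ convention of the paper.

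I expect the main obstacle to be the estimate of $\sum_{k}|S_k|^2$, and specifically resisting the temptation to bound $|S_k|^2$ by $k\sum_{i\le k}|p_i|^2$: that term-by-term Cauchy--Schwarz step over-counts, summing to $O(N^2)$, and reintroduces the spurious extra power of $N$, i.e. the pessimistic deterministic bound $N\varepsilon\|a\|_2\|b\|_2$. The honest bookkeeping is to keep every partial sum as an $\ell_1$-type quantity bounded \emph{uniformly} by $\sum_i|a_i||b_i|$, so that the only surviving factor of $N$ comes from counting the $\sim N$ independent error sources — again reflecting that variances, not magnitudes, accumulate.

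A secondary point to handle carefully is the complex arithmetic: a complex multiply--add is a fixed number of real roundings, so $\delta_i$ and $\eta_k$ are in fact small complex perturbations whose exact variance constants differ from the real case. These constants, together with the factor $2$ in $2N-1$, are absorbed into the constant hidden by $\lesssim$, and the independence of the real and imaginary roundings does not affect the counting of shared error sources between different terms. I would also remark that the bound is tight in this model: for data with $|a_i|=c|b_i|$ and aligned phases there is no cancellation in the $S_k$, so $\sum_k|S_k|^2$ genuinely scales like $N\|a\|_2^2\|b\|_2^2$, confirming that $\sqrt N$ rather than $\log N$ is the correct exponent for sequential summation.
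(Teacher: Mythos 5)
Your argument is correct, and it is essentially the same technique the paper itself relies on: the paper states this theorem as a cited result from \cite{HighamPHD} without proof, but its own proof of Proposition~\ref{prop:scalar} uses exactly your bookkeeping --- a first-order recursion in which each step adds an independent zero-mean error proportional to the current partial sum (cf.\ \eqref{eq:Deltaa2}), after which variances are summed in quadrature so that the $N$ independent sources contribute $\sqrt{N}$ rather than $N$. Your uniform bound $|S_k|\le\|a\|_2\|b\|_2$ via Cauchy--Schwarz, and your warning against the term-by-term estimate that would reintroduce the deterministic factor $N$, are precisely the right points, and the complex-arithmetic constants are indeed absorbed into the paper's $\lesssim$ convention.
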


\begin{remark}
Errors of scalar products can be significantly decreased if one uses a binary tree structure to calculate them. In this case, one gets $\sqrt{\ln N}$ coefficient instead of $\sqrt{N}$.
\end{remark}

In the case of Cholesky decomposition, however, we will need a stronger result. This is because even before we do Cholesky decomposition, we need to calculate $A = H^H H$ matrix product, which is a combination of scalar products. This product already leads to a huge initial error $\Delta \left( H^H H \right)$, which we will need to estimate accurately.

It is enough to focus on the round-off error in $A$ because the solution error will then be proportional to 
\[
{\rm cond}_2 \left( A \right) = {\rm cond}_2^2 \left( H \right).
\]
On the other hand, errors coming from the inexact computation of $L^{-1}$ or from $WY$ multiplication \eqref{eq:WY} only lead to errors, proportional to the first power ${\rm cond}_2 \left( H \right)$. Corresponding classical and randomized bounds can be found in \cite{accuracy} and \cite{HighamPHD} respectively.

\section{Better scalar product bound}

To prove a better scalar product bound, we first require some classic results about matrix volume.

\begin{definition}\label{def:vol}
Volume $\cV \left( A \right)$ of a matrix $A$ is
\[
  \cV \left( A \right) = \sqrt{\max \left( \det A^H A, \det A A^H \right)}.
\]
\end{definition}

\begin{theorem}[Binet-Cauchy]
Let $A_{\mathcal{I},:}, B_{\mathcal{I},:} \in \mathbb{C}^{N \times N}$ be square submatrices of $A, B \in \mathbb{C}^{M \times N}$, corresponding to the set of row indices $\mathcal{I}$ of cardinality $N$. Then
\[
  \det A^H B = \sum\limits_{\mathcal{I}} \det A_{\mathcal{I},:}^H B_{\mathcal{I},:}.
\]
Consequently,
\begin{equation}\label{eq:cVbinet}
  \cV^2 \left( A \right) = \sum\limits_{\mathcal{I}} \cV^2 \left( A_{\mathcal{I},:} \right).
\end{equation}
\end{theorem}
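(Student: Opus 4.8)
The plan is to establish the determinant identity $\det A^H B = \sum_{\mathcal{I}} \det A_{\mathcal{I},:}^H B_{\mathcal{I},:}$ by a direct expansion using the multilinearity and antisymmetry of the determinant, and then to obtain the volume statement \eqref{eq:cVbinet} as the special case $B = A$.

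First I would write $C = A^H B \in \mathbb{C}^{N \times N}$ with entries $C_{jk} = \sum_{i=1}^M (A^H)_{ji} B_{ik}$ and apply the Leibniz formula $\det C = \sum_\sigma \operatorname{sgn}(\sigma) \prod_{j=1}^N C_{j\sigma(j)}$. Substituting the inner sum over $i$ into each factor turns $\det C$ into a sum over all index maps $i : \{1,\dots,N\} \to \{1,\dots,M\}$. The next step is to observe that any map $i$ that is \emph{not} injective contributes zero: two coinciding indices produce two identical rows in the associated $N\times N$ matrix, so antisymmetry annihilates that term.

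Then I would regroup the surviving injective maps according to their image $\mathcal{I} \subseteq \{1,\dots,M\}$, a set of cardinality $N$. For each fixed $\mathcal{I}$, summing over the $N!$ orderings of $\mathcal{I}$ while tracking the permutation signs exactly reassembles $\det A_{\mathcal{I},:}^H B_{\mathcal{I},:}$. This sign bookkeeping is the one step that requires care, since one must verify that the signature of the composite permutation factors correctly into the sign coming from the chosen ordering of $\mathcal{I}$ and the sign $\operatorname{sgn}(\sigma)$; everything else in the expansion is routine. This yields the Cauchy–Binet identity.

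Finally, for \eqref{eq:cVbinet} I would set $B = A$. Since the cardinality condition $|\mathcal{I}| = N$ forces $A$ to be tall ($M \ge N$), the product $A A^H$ is rank-deficient, so $\det A A^H = 0$ and $\cV^2(A) = \det A^H A$. Cauchy–Binet then gives $\det A^H A = \sum_{\mathcal{I}} \det A_{\mathcal{I},:}^H A_{\mathcal{I},:} = \sum_{\mathcal{I}} |\det A_{\mathcal{I},:}|^2$. Because each $A_{\mathcal{I},:}$ is square, $\det A_{\mathcal{I},:}^H A_{\mathcal{I},:} = \det A_{\mathcal{I},:} A_{\mathcal{I},:}^H = |\det A_{\mathcal{I},:}|^2$, whence $\cV^2(A_{\mathcal{I},:}) = |\det A_{\mathcal{I},:}|^2$ and the claimed summation formula follows. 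The main obstacle is thus purely the combinatorial sign accounting in the regrouping step.
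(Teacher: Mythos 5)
The paper states this as a classical result and gives no proof of its own, so there is nothing to compare against; your argument is the standard textbook proof of Cauchy--Binet (Leibniz expansion, annihilation of non-injective index maps by antisymmetry, regrouping by image with sign bookkeeping) and it is correct, including the reduction of \eqref{eq:cVbinet} to the case $B=A$. One small inaccuracy worth fixing: when $M=N$ the matrix $AA^H$ is \emph{not} rank-deficient, so your claim that $\det AA^H=0$ fails there; in that case $\det AA^H=\det A^HA=\lvert\det A\rvert^2$ and $\cV^2(A)=\det A^HA$ holds anyway (and the sum has a single term), so the conclusion is unaffected, but the justification should read ``for $M>N$ the product $AA^H$ is rank-deficient, and for $M=N$ the two determinants coincide.''
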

\begin{cons}\label{cor:binet}
Let $A$ be a random matrix with permutation invariant distribution (random unitary matrices and their submatrices are such examples). Then
\[
  \mathbb{E} \cV^2 \left( A \right) = C_M^N \mathbb{E} \cV^2 \left( A_{\mathcal{I},:} \right)
\]
as there are $C_M^N$ submatrices in the sum \eqref{eq:cVbinet}, and all of them are equally distributed because of permutation invariance.
\end{cons}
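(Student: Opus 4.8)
The plan is to start directly from the Binet--Cauchy identity \eqref{eq:cVbinet}, namely $\cV^2\left( A \right) = \sum_{\mathcal{I}} \cV^2\left( A_{\mathcal{I},:} \right)$ where the sum runs over all $C_M^N$ row-index subsets $\mathcal{I}$ of cardinality $N$, and simply take expectations of both sides. By linearity of expectation,
\[
  \mathbb{E}\,\cV^2\left( A \right) = \sum_{\mathcal{I}} \mathbb{E}\,\cV^2\left( A_{\mathcal{I},:} \right),
\]
so the whole claim reduces to showing that every term $\mathbb{E}\,\cV^2\left( A_{\mathcal{I},:} \right)$ in this sum is the same number, independent of which $N$ rows are selected. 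Granting that, the $C_M^N$ identical summands immediately collapse to $C_M^N\,\mathbb{E}\,\cV^2\left( A_{\mathcal{I},:} \right)$ for any fixed $\mathcal{I}$.

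Next I would make the permutation-invariance hypothesis precise: the distribution of $A$ is unchanged under left multiplication by any row-permutation matrix, i.e. $P_\sigma A$ has the same law as $A$ for every permutation $\sigma$ of $\left\{ 1,\dots,M \right\}$. Given two index sets $\mathcal{I}$ and $\mathcal{J}$ of equal cardinality $N$, one picks a permutation $\sigma$ carrying $\mathcal{I}$ onto $\mathcal{J}$; then $A_{\mathcal{I},:}$ and $A_{\mathcal{J},:}$ coincide in distribution up to a reordering of their $N$ rows. The observation that renders this reordering harmless is that, for a square $N \times N$ submatrix, $\cV^2\left( A_{\mathcal{I},:} \right) = \left| \det A_{\mathcal{I},:} \right|^2$ (since $\det A_{\mathcal{I},:}^H A_{\mathcal{I},:} = \det A_{\mathcal{I},:} A_{\mathcal{I},:}^H = \left| \det A_{\mathcal{I},:} \right|^2$), which is invariant under permuting rows within the block. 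Hence $\mathbb{E}\,\cV^2\left( A_{\mathcal{I},:} \right) = \mathbb{E}\,\cV^2\left( A_{\mathcal{J},:} \right)$ for all pairs, which is exactly what the collapsing of the sum requires.

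The step needing the most care --- though it is hardly a real obstacle --- is this last identical-distribution argument: one must check both that row permutations act transitively on the cardinality-$N$ subsets (immediate) and that $\cV^2$ is genuinely insensitive to the within-block row order, so that equality in law of the submatrices transfers to equality of the expected squared volumes. Verifying that the examples named in the statement (random unitary matrices and their submatrices) are permutation invariant is then a one-line consequence of the invariance of Haar measure under multiplication by permutation matrices. Everything else is linearity of expectation and the counting of the $C_M^N$ subsets already recorded in the statement.
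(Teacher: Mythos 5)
Your proposal is correct and follows essentially the same route as the paper, which justifies the corollary in one line by taking the expectation of the Binet--Cauchy identity \eqref{eq:cVbinet} and invoking equidistribution of the $C_M^N$ submatrices under permutation invariance. Your additional care about the within-block row ordering (handled via $\cV^2\left( A_{\mathcal{I},:} \right) = \left| \det A_{\mathcal{I},:} \right|^2$) is a worthwhile detail the paper leaves implicit, but it does not change the argument.
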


\begin{proposition}\label{prop:scalar}
Let $u, v \in \mathbb{C}^N$ be random unit vectors satisfying $u^H v = {\rm const}$. Let $a = u \left\| a \right\|_2$, $b = v \left\| b \right\|_2$, then
\begin{equation}\label{eq:aHb}
  \left| \Delta \left( a^H b \right) \right| \sim \sqrt{N} \varepsilon \left| a^H b \right| + \varepsilon \left\| a \right\|_2 \left\| b \right\|_2.
\end{equation}
\end{proposition}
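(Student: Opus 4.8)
The plan is to combine the classical first-order floating-point model of a sequentially accumulated inner product with an average over the random directions of $a$ and $b$, using the volume identity \eqref{eq:cVbinet} to control the partial sums. Note first that the hypotheses make $\left\|a\right\|_2$, $\left\|b\right\|_2$ and $a^H b$ deterministic, while the directions of $a,b$ are permutation-invariant. Writing $S_j = \sum_{k=1}^j \bar a_k b_k$ for the $j$-th partial sum, the product and addition round-offs contribute, to first order,
\[
\Delta\left(a^H b\right) = \sum_{k=1}^N \bar a_k b_k\,\delta_k^{(p)} + \sum_{j=2}^N S_j\,\delta_j^{(s)},
\]
where $\delta_k^{(p)},\delta_j^{(s)}$ are the individual rounding errors. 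Since these are independent, zero-mean, of variance $\varepsilon^2$, and independent of the vectors, squaring and taking expectations annihilates every cross term and leaves
\[
\mathbb{E}\left|\Delta\left(a^H b\right)\right|^2 = \varepsilon^2\,\mathbb{E}\sum_{k=1}^N \left|a_k\right|^2\left|b_k\right|^2 + \varepsilon^2\sum_{j=2}^N \mathbb{E}\left|S_j\right|^2,
\]
so everything reduces to the mean squared partial sums $\mathbb{E}\left|S_j\right|^2$ (the first term is lower order, as shown below).

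The key step is to evaluate $\mathbb{E}\left|S_j\right|^2$ through the matrix volume. Let $a^{(j)},b^{(j)}$ be the restrictions of $a,b$ to their first $j$ coordinates, regarded as the two columns of a $j\times 2$ matrix. Since the $2\times 2$ Gram determinant equals $\cV^2$, the Lagrange-type identity reads
\[
\left|S_j\right|^2 = \left\|a^{(j)}\right\|_2^2\left\|b^{(j)}\right\|_2^2 - \cV^2\left(\left[a^{(j)},\,b^{(j)}\right]\right).
\]
Because $\left[a,b\right]$ is permutation-invariant, any $j$-coordinate restriction is equidistributed, and applying Corollary~\ref{cor:binet} (i.e.\ \eqref{eq:cVbinet}) twice gives the averaged volume in closed form,
\[
\mathbb{E}\,\cV^2\left(\left[a^{(j)},\,b^{(j)}\right]\right) = \frac{C_j^2}{C_N^2}\,\cV^2\left(\left[a,\,b\right]\right) = \frac{j(j-1)}{N(N-1)}\left(\left\|a\right\|_2^2\left\|b\right\|_2^2 - \left|a^H b\right|^2\right),
\]
where the last equality uses that $\cV^2\left(\left[a,b\right]\right)$ is the full $2\times2$ Gram determinant $\left\|a\right\|_2^2\left\|b\right\|_2^2-\left|a^H b\right|^2$, which is deterministic here.

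It remains to expand $\left\|a^{(j)}\right\|_2^2\left\|b^{(j)}\right\|_2^2 = \sum_{k,l\le j}\left|a_k\right|^2\left|b_l\right|^2$ and separate the diagonal $k=l$ from the off-diagonal part. By permutation invariance this averages to $jp + j(j-1)q$ with $p=\mathbb{E}\left|a_1\right|^2\left|b_1\right|^2$ and $q=\mathbb{E}\left|a_1\right|^2\left|b_2\right|^2$, tied together by $\left\|a\right\|_2^2\left\|b\right\|_2^2 = Np + N(N-1)q$. Substituting the volume formula yields $\mathbb{E}\left|S_j\right|^2$ as a quadratic in $j$; summing over $j$ (using $\sum j^2\sim N^3/3$) leaves exactly two surviving contributions, of orders $N\left|a^H b\right|^2$ and $\left\|a\right\|_2^2\left\|b\right\|_2^2$ (the latter via $Np\sim\left\|a\right\|_2^2\left\|b\right\|_2^2/N$). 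The product-error term contributes only $\varepsilon^2\left\|a\right\|_2^2\left\|b\right\|_2^2/N$ and is negligible. Hence $\mathbb{E}\left|\Delta(a^Hb)\right|^2\sim\varepsilon^2\left(N\left|a^Hb\right|^2+\left\|a\right\|_2^2\left\|b\right\|_2^2\right)$, and taking the square root with $\sqrt{x+y}\sim\sqrt{x}+\sqrt{y}$ gives \eqref{eq:aHb}.

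I expect the delicate part to be this final bookkeeping. The naive concentration estimate $\left\|a^{(j)}\right\|_2^2\approx(j/N)\left\|a\right\|_2^2$ reproduces the leading $N\left|a^Hb\right|^2$ term but silently cancels the second term, so one must retain the diagonal correlation $p$ (equivalently, the fluctuations of the partial Euclidean norms). This $\left\|a\right\|_2^2\left\|b\right\|_2^2$ floor persists even when $a\perp b$: it is the residual variance of the partial-sum random walk $S_j$ whose endpoint $S_N$ vanishes, and tracking it accurately through the summation over $j$ is the real content of the bound.
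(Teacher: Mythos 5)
Your proof is correct and reaches the stated bound, but it is organized differently from the paper's argument. The paper splits $b = b_{\parallel} + b_{\bot}$ with respect to $a$ and treats the two error contributions separately: the parallel part reduces to partial sums of $\left\| a \right\|_2^2$, which concentrate at $k \left\| a \right\|_2^2 / N$ and yield the $\sqrt{N}\varepsilon\left|a^H b\right|$ term, while the orthogonal part requires the bound $\mathbb{E} \left| u_{1:k}^H v_{\bot,1:k} \right|^2 \leqslant k/N^2$, obtained from the volume of a $k \times 2$ submatrix of a unitary matrix via Corollary~\ref{cor:binet}, and yields the $\varepsilon \left\| a \right\|_2 \left\| b \right\|_2$ term (plus a sum-from-both-ends trick to sharpen the constant). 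You instead keep $b$ whole and compute $\mathbb{E}\left|S_j\right|^2$ for the partial sums of $a^H b$ directly, via the Gram/Lagrange identity $\left|S_j\right|^2 = \lVert a^{(j)}\rVert_2^2 \lVert b^{(j)}\rVert_2^2 - \cV^2\left(\left[a^{(j)},\,b^{(j)}\right]\right)$ and the same Binet--Cauchy corollary, recovering both terms from a single exact computation (your check that $\mathbb{E}\left|S_N\right|^2 = \left|a^H b\right|^2$ and $\mathbb{E}\left|S_1\right|^2 = p$ confirms the bookkeeping). The key lemma is the same, but your route has two advantages: it avoids the paper's implicit assumption that the round-offs in $a^H b_{\parallel}$ and $a^H b_{\bot}$ are independent (the algorithm never computes these two projections separately, so that independence is a modeling convenience rather than a fact about the computation), and it correctly isolates the $\left\| a \right\|_2^2 \left\| b \right\|_2^2$ floor as coming from the diagonal correlation $p$ rather than from a separate orthogonal-component analysis. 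The price is the extra combinatorial step relating $p$ and $q$ through $\left\| a \right\|_2^2 \left\| b \right\|_2^2 = Np + N(N-1)q$; also note you are tacitly assuming joint permutation invariance of the rows of $\left[a,\,b\right]$, which the proposition does not state explicitly but which matches the paper's own model of $u$ and $v$ as columns of a Haar-random unitary matrix.
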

\begin{proof}
Let $b_{\parallel} = a \left( a^H b \right) / \left\| a \right\|_2^2$ and $b_{\bot} = b - b_{\parallel}$. Then, using orthogonality and round-off error independence assumption,
\[
\begin{aligned}
  \mathbb{E} \left| a^H b + \Delta \left( a^H b \right) \right|^2 & = \mathbb{E} \left| a^H b_{\parallel} + \Delta \left( a^H b_{\parallel} \right) \right|^2 \\
  & + \mathbb{E} \left| a^H b_{\bot} + \Delta \left( a^H b_{\bot} \right) \right|^2,
\end{aligned}
\]
where $\Delta \left( a^H b_{\parallel} \right)$ and $\Delta \left( a^H b_{\bot} \right)$ are the errors of operations during the computation of $a^H b_{\parallel}$ and $a^H b_{\bot}$ respectively.

The first term calculates the inexact value of 
\[
\sum\limits_{i=1}^N \left| a_i \right|^2 = \left\| a \right\|_2^2.
\]
with elements of the sum premultiplied by $a^H b / \left\| a \right\|_2^2$. Let us calculate the increase in error, when the $k$-th term of the sum is added:
\begin{equation}\label{eq:Deltaa2}
  \left| \Delta \left( \sum\limits_{i=1}^k \left| a_i \right|^2 \right) \right| \lesssim \left| \Delta \left( \sum\limits_{i=1}^{k-1} \left| a_i \right|^2 \right) \right| + \sqrt{2} \varepsilon \sum\limits_{i=1}^{k} \left| a_i \right|^2,
\end{equation}
where we accounted for error independence and the fact that summation contains two terms (for real and imaginary parts individually), leading to a coefficient $\sqrt{2}$ (for fused add-multiply).

Taking the expectation of the sum in the r.h.s. of \eqref{eq:Deltaa2}, we find
\[
  \left| \Delta \left( \sum\limits_{i=1}^k \left| a_i \right|^2 \right) \right| \lesssim \left| \Delta \left( \sum\limits_{i=1}^{k-1} \left| a_i \right|^2 \right) \right| + \sqrt{2} \varepsilon \frac{k}{N} \left\| a \right\|_2^2,
\]
so in total
\[
\begin{aligned}
  \left| \Delta \left( a^H b_{\parallel} \right) \right| & \lesssim \frac{\left| a^H b \right|}{\left\| a \right\|_2^2} \sqrt{\sum\limits_{k=1}^N 2 \varepsilon^2 \frac{k^2}{N^2} \left\| a \right\|_2^4} \\
  & = \sqrt{\frac{(N+1)(2N+1)}{3N}} \varepsilon \left| a^H b \right|
  \sim \sqrt{N} \varepsilon \left| a^H b \right|.
\end{aligned}
\]
Thus, we obtained the first term in \eqref{eq:aHb}.
\end{proof}
Next, let us consider $\Delta \left( a^H b_{\bot} \right)$. Here the scalar product $a^H b_{\bot} = 0$, and $b_{\bot}$ is still a random vector with fixed $\left\| b_{\bot} \right\|_2^2 = \left\| b \right\|_2^2 - \left| a^H b \right|^2$. Since $u$ and $v_{\bot} = b_{\bot} / \left\| b_{\bot} \right\|_2$ are random unit orthogonal vectors, we can consider them as two consecutive columns of a unitary matrix. Consider its $k \times 2$ submatrix $A = \left[ u_{1:k} \; v_{\bot,1:k} \right]$. Its expected squared volume (computed by definition \ref{def:vol}) is
\[
\begin{aligned}
  \mathbb{E} \cV^2 \left( A \right)  = \mathbb{E} \left( \left\| u_{1:k} \right\|_2^2 \left\| v_{\bot,1:k} \right\|_2^2 \right) 
   -  \mathbb{E} \left| u_{1:k}^H v_{\bot,1:k} \right|^2 \\
   = \mathbb{E}_u \left[ \left\| u_{1:k} \right\|_2^2 \right] \mathbb{E}_{v_{\bot}} \left[ \left\| v_{\bot,1:k} \right\|_2^2 | u \right]
   -  \mathbb{E} \left| u_{1:k}^H v_{\bot,1:k} \right|^2.
\end{aligned}
\]

Conditional expectation $\mathbb{E}_{v_{\bot}} \left[ \left\| v_{\bot,1:k} \right\|_2^2 | u \right]$ can be easily calculated: there are in total $N-1$ columns in a unitary matrix (apart from $u$), total Frobenius norm of $k$ corresponding rows is equal to $k - \left\| u_{1:k} \right\|_2^2$, so the expected norm of each of these $N-1$ columns is
\[
  \mathbb{E}_{v_{\bot}} \left[ \left\| v_{\bot,1:k} \right\|_2^2 | u \right] = \left( N - \left\| u_{1:k} \right\|_2^2 \right) / \left( N - 1 \right).
\]
Thus, we continue
\begin{equation}\label{eq:cVA}
\begin{aligned}
  &\mathbb{E} \cV^2 \left( A \right) = \\
  &\mathbb{E} \left[ \left\| u_{1:k} \right\|_2^2 \left( k - \left\| u_{1:k} \right\|_2^2 \right) \right] / (N-1) 
   -  \mathbb{E} \left| u_{1:k}^H v_{\bot,1:k} \right|^2 \\
  & \leqslant \mathbb{E} \left\| u_{1:k} \right\|_2^2 \left( k - \mathbb{E} \left\| u_{1:k} \right\|_2^2 \right) / (N-1)
   -  \mathbb{E} \left| u_{1:k}^H v_{\bot,1:k} \right|^2 \\
  & = \frac{k^2}{N^2} -  \mathbb{E} \left| u_{1:k}^H v_{\bot,1:k} \right|^2.
\end{aligned}
\end{equation}
On the other hand, using corollary \ref{cor:binet}, we obtain
\begin{equation}\label{eq:cVuv}
  1 = \cV^2 \left( \left[ u \; v_{\bot} \right] \right) = \frac{C_N^2}{C_k^2} \mathbb{E} \cV^2 \left( A \right) = \frac{N(N-1)}{k(k-1)} \cV^2 \left( A \right).
\end{equation}
Combining \eqref{eq:cVA} and \eqref{eq:cVuv} together, we obtain the inequality
\[
  \mathbb{E} \left| u_{1:k}^H v_{\bot,1:k} \right|^2 \leqslant \frac{k^2}{N^2} - \frac{k(k-1)}{N(N-1)} \leqslant \frac{k}{N^2}.
\]
Now we can estimate errors in the partial sums the same way as in equation \eqref{eq:Deltaa2}:
\[
\begin{aligned}
  \left| \Delta \left( \sum\limits_{i=1}^k {\rm conj} \left( a_i \right) b_{\bot,i} \right) \right| & \lesssim \left| \Delta \left( \sum\limits_{i=1}^{k-1} {\rm conj} \left( a_i \right) b_{\bot,i} \right) \right| \\
  & + \sqrt{2} \varepsilon \left| \sum\limits_{i=1}^k {\rm conj} \left( a_i \right) b_{\bot,i} \right| \\
   \sim \left| \Delta \left( \sum\limits_{i=1}^{k-1} {\rm conj} \left( a_i \right) b_{\bot,i} \right) \right| 
  & + \sqrt{2} \varepsilon \frac{\sqrt{k}}{N} \left\| a \right\|_2 \left\| b_{\bot} \right\|_2.
\end{aligned}
\]
This summation can be performed from both sides ($1$ to $N/2$ and $N$ to $N/2$) for a better estimate. Then we get
\[
\begin{aligned}
  & \left| \Delta \left( a^H b_{\bot} \right) \right| \lesssim \\
  & \lesssim \sqrt{\sum\limits_{k=1}^{\left\lfloor N/2 \right\rfloor} 2 \varepsilon^2 \frac{k}{N^2} \left\| a \right\|_2^2 \left\| b_{\bot} \right\|_2^2 + \!\!\!\! \sum\limits_{k=1}^{\left\lfloor (N+1)/2 \right\rfloor} 2 \varepsilon^2 \frac{k}{N^2} \left\| a \right\|_2^2 \left\| b_{\bot} \right\|_2^2} \\
  & \leqslant \frac{N+1}{\sqrt{2}N} \varepsilon \left\| a \right\|_2 \left\| b_{\bot} \right\|_2 
  \lesssim \varepsilon \left\| a \right\|_2 \left\| b \right\|_2,
\end{aligned}
\]
which is the second term in \eqref{eq:aHb}.

In case there is no fused add-multiply, there is an additional error coming from the round-off of each individual product. We can estimate it as
\[
\begin{aligned}
  \sum\limits_{i=1}^N \left| \Delta \left( {\rm conj} \left( a_i \right) b_i \right) \right| & \sim \sum\limits_{i=1}^N \varepsilon \left| {\rm conj} \left( a_i \right) b_{i, \parallel} \right| \\
  & + \varepsilon \sum\limits_{i=1}^N \left| {\rm conj} \left( a_i \right) b_{i, \bot} \right| \\
  & \sim \frac{\varepsilon}{\sqrt{N}} \left| a^H b \right| + \frac{\varepsilon}{\sqrt{N}} \left\| a \right\|_2 \left\| b_{\bot} \right\|_2,
\end{aligned}
\]
which is lower than that of previous sources.


\section{More assumptions}\label{sec:assum}

To get more accurate results for the Cholesky decomposition error estimate, we require more assumptions. First, we are going to consider $X$ to have uniformly random direction in $\mathbb{C}^N$, and $H$ taken (independently) from the RANDSVD ensemble.

\begin{definition}\label{def:randsvd}
$A \sim RANDSVD \left( \Sigma \right)$ for diagonal matrix $\Sigma$ if
\[
  A = U \Sigma V,
\]
where $U \in \mathbb{C}^{M \times \min \left( M, N \right)}$ and $V \in \mathbb{C}^{\min \left( M, N \right) \times N}$ are the first submatrices of independent random unitary matrices (with Haar measure).
\end{definition}

This is a natural way to construct an ensemble of matrices for any arbitrary set of singular values. Naturally, channel matrices in practice are not taken from this ensemble. Nevertheless, any arbitrary matrix lies in the ensemble, corresponding to its singular values, so when we prove some bound for the RANDSVD ensemble, it already means that there are few matrices (a subset of low measure) that break this bound. As channel matrices and round-offs in them are usually far from the worst-case bound of Theorem \ref{th:CholeskyOld}, this is a reasonable replacement for the unknown ensemble of arbitrary channel matrices.

Next, we are going to use the following assumptions:
\begin{enumerate}
\item We replace round-off errors with Gaussian random variables with the same expectation and variance. 

\item Similarly, matrices of errors (which we will denote by $\Delta A$) will be replaced with random Gaussian matrices with the same average Frobenius norm.
\end{enumerate}

Replacing error matrices with random Gaussian matrices allows estimating errors using the following proposition, which we already successfully used in \cite{OsinskyRegularization}.

\begin{proposition}[\cite{GaussProduct}]\label{prop:gauss}
Let $\Omega_{M \times N} \in \mathbb{C}^{M \times N}$ be a random Gaussian matrix with independent entries. Then
\[
\begin{aligned}
  \left\| A \Omega_{M \times N} B \right\|_F & \sim \frac{\left\| A \right\|_F \left\| \Omega_{M \times N} \right\|_F \left\| B \right\|_F}{\sqrt{MN}} \\ 
  & \sim \left\| A \right\|_F \left\| B \right\|_F.
\end{aligned}
\]
\end{proposition}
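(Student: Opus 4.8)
The plan is to reduce the whole statement to the single second-moment identity $\mathbb{E}\|A\Omega_{M\times N} B\|_F^2 = \|A\|_F^2\|B\|_F^2$, since under the paper's convention $\sim$ only compares expected squared Frobenius norms up to constant factors. Throughout I treat $A$ and $B$ as deterministic and normalize the entries of $\Omega_{M\times N}$ to unit variance, $\mathbb{E}|\omega_{ij}|^2 = 1$.

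First I would expand the norm entrywise. Writing $(A\Omega_{M\times N} B)_{pq} = \sum_{i,j} A_{pi}\,\omega_{ij}\,B_{jq}$, we have $\|A\Omega_{M\times N} B\|_F^2 = \sum_{p,q}|(A\Omega_{M\times N} B)_{pq}|^2$. Taking expectations and using that the $\omega_{ij}$ are independent, zero-mean and circularly symmetric, every cross term indexed by distinct pairs $(i,j)\neq(i',j')$ cancels, so
\[
  \mathbb{E}\,|(A\Omega_{M\times N} B)_{pq}|^2 = \sum_{i,j}|A_{pi}|^2\,|B_{jq}|^2\,\mathbb{E}|\omega_{ij}|^2 = \left(\sum_i |A_{pi}|^2\right)\left(\sum_j |B_{jq}|^2\right).
\]
Summing over $p$ and $q$ the two factors separate, yielding $\mathbb{E}\|A\Omega_{M\times N} B\|_F^2 = \|A\|_F^2\|B\|_F^2$. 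This Gaussian isometry-in-expectation identity is the only real computation in the proof.

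Second, I would record the elementary fact $\mathbb{E}\|\Omega_{M\times N}\|_F^2 = MN$, immediate since $\|\Omega_{M\times N}\|_F^2$ is a sum of $MN$ independent terms each of unit expected squared modulus. Combining the two gives
\[
  \mathbb{E}\|A\Omega_{M\times N} B\|_F^2 = \|A\|_F^2\|B\|_F^2 = \frac{\|A\|_F^2\,\mathbb{E}\|\Omega_{M\times N}\|_F^2\,\|B\|_F^2}{MN},
\]
which is precisely both relations in the statement: the second $\sim$ is the replacement of $\|\Omega_{M\times N}\|_F$ by its typical value $\sqrt{MN}$, and the first $\sim$ asserts that $\|A\Omega_{M\times N} B\|_F^2$ equals, up to constants, $\|A\|_F^2\|B\|_F^2$ times that same factor $\|\Omega_{M\times N}\|_F^2/(MN)$.

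The subtlety I expect to be the main obstacle is that the first relation carries the same random $\Omega_{M\times N}$ on both sides, so its intended meaning is a pathwise comparison rather than a mere equality of expectations. Making that fully rigorous would require concentration — Gaussian concentration of $\|A\Omega_{M\times N} B\|_F$ about its mean together with concentration of $\|\Omega_{M\times N}\|_F^2$ about $MN$ — so that a typical realization makes both sides track $\|A\|_F\|B\|_F$ simultaneously. Within the approximate $\sim$ framework of this paper, however, the moment identity above already suffices, and the cross-term cancellation is the crux of the argument.
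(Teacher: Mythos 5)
The paper does not prove this proposition at all; it is imported verbatim from the cited reference \cite{GaussProduct}, so there is no in-paper argument to compare against. Your second-moment computation is correct and is the standard way to establish the claim: for independent zero-mean unit-variance entries the cross terms vanish, giving $\mathbb{E}\left\| A \Omega_{M \times N} B \right\|_F^2 = \left\| A \right\|_F^2 \left\| B \right\|_F^2$ exactly, and together with $\mathbb{E}\left\| \Omega_{M \times N} \right\|_F^2 = MN$ this is all that the paper's $\sim$ convention (equality of expected squared Frobenius norms up to constants) requires. Your closing caveat is also well placed: the first $\sim$, read pathwise with the same random $\Omega_{M \times N}$ on both sides, would need a concentration argument that neither you nor the paper supplies, but within the paper's stated expectation-level semantics the moment identity suffices.
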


\section{Cholesky decomposition error}

\begin{theorem}\label{th:Cholesky}
Under all the above assumptions, floating point Cholesky of positive definite $A \in \mathbb{C}^{N \times N}$, $A = H^H H$ is equivalent (when runs to completion) to exact Cholesky of $\tilde A = A + \Delta A$, where
\begin{equation}\label{eq:DeltaA}
  \Delta A = \Delta A_1 + \Delta A_1^H
\end{equation}
and
\begin{equation}\label{eq:DeltaA1}
  \left\| \Delta A_1 \right\|_F \lesssim \sqrt{N} \varepsilon \left\| A \right\|_F.
\end{equation}

Moreover, elements of the error bound for $\Delta A_1$ are independent and equally distributed.
\end{theorem}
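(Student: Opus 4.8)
The plan is to collapse the entire floating-point pipeline into a single backward perturbation of $A$ and then estimate that perturbation entrywise with Proposition \ref{prop:scalar}. First I would argue that only errors equivalent to a perturbation of $A$ matter at leading order: by the conditioning observation preceding this section, a perturbation of $A$ is amplified in the solution by ${\rm cond}_2(A) = {\rm cond}_2^2(H)$, whereas the errors from inverting $L$ and from the product $WY$ are amplified only by ${\rm cond}_2(H)$. I therefore lump the error of forming $A = H^H H$ together with the backward error of the Cholesky step itself (Theorem \ref{th:CholeskyOld}, which is of the same order) into one $\Delta A$ satisfying $\tilde L\tilde L^H = A + \Delta A$. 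Since $A$ is Hermitian and only one triangle is actually computed and then reflected, $\Delta A$ is Hermitian, and this is exactly the split $\Delta A = \Delta A_1 + \Delta A_1^H$ of \eqref{eq:DeltaA} with $\Delta A_1$ the upper-triangular part carrying half the diagonal.

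Next I would bound the entries of $\Delta A_1$. Each entry is $A_{ij} = h_i^H h_j$, a scalar product of two columns of $H$, so Proposition \ref{prop:scalar} applies and yields a bound of the form $|\Delta A_{ij}| \sim \sqrt{N}\,\varepsilon\,|A_{ij}| + \varepsilon\,\|h_i\|_2\|h_j\|_2$; the random-direction and fixed-overlap hypotheses are supplied by the RANDSVD model, since writing $H = U\Sigma V$ the columns are $h_i = U\Sigma v_i$ with $v_i$ the columns of a Haar unitary $V$, so their directions are random while $\Sigma$ is held fixed. Summing the second (absolute-error) term over $i \le j$ is the clean part: using $\|h_i\|_2^2 = A_{ii}$ and the Cauchy--Schwarz inequality $(\tr A)^2 = \big(\sum_k \sigma_k^2\big)^2 \le N\sum_k\sigma_k^4 = N\|A\|_F^2$, I get $\sum_{i\le j}\|h_i\|_2^2\|h_j\|_2^2 \approx \tfrac12(\tr A)^2 \lesssim N\|A\|_F^2$, hence a contribution $\lesssim \sqrt N\,\varepsilon\,\|A\|_F$. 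The first (relative-error) term sums to $\lesssim \varepsilon\big(\sum_{i\le j}|A_{ij}|^2\big)^{1/2}$ times the scalar-product accumulation length, and I would argue this stays at the same $\sqrt N$ level, so both contributions collapse to the single bound \eqref{eq:DeltaA1}.

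For the final sentence of the theorem I would invoke the two modelling assumptions of Section \ref{sec:assum}: replacing each rounding error by an independent Gaussian of matching variance makes the entries of the bounding matrix independent, and the permutation invariance of the RANDSVD ensemble (Corollary \ref{cor:binet}) makes them identically distributed, since $A = V^H\Sigma^2 V$ has a law invariant under simultaneous row/column permutations. This independent, identically distributed Gaussian model for $\Delta A_1$ is precisely what later lets Proposition \ref{prop:gauss} propagate the error.

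The main obstacle I anticipate is the honest accounting of dimensions in the entrywise step. The columns $h_i$ live in $\mathbb{C}^M$, so the floating-point accumulation physically runs over $M$ terms, whereas the random-direction argument behind Proposition \ref{prop:scalar} only sees the $N$-dimensional range of $U$; I would have to verify that the relative-error term genuinely contributes at the $\sqrt N$ level rather than being inflated to $\sqrt M$, for instance by appealing to the tree-summation refinement (the Remark after Theorem \ref{th:scalar_orig}) or by absorbing it into the better-conditioned part of the error. A secondary difficulty is checking that the Cholesky step's own backward error is genuinely Hermitian and no larger than $\sqrt N\,\varepsilon\,\|A\|_F$, so that folding it into $\Delta A_1$ does not spoil \eqref{eq:DeltaA1}.
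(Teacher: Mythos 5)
Your proposal misidentifies what the theorem is actually about, and the part you defer as a ``secondary difficulty'' is in fact the entire content of the proof. Theorem \ref{th:Cholesky} is a statement about the backward error of the Cholesky \emph{algorithm} applied to an already-formed $A$: the paper proves it by walking through the three lines of the pseudocode --- the square root on the diagonal contributes a relative error of order $2\varepsilon$ per diagonal entry (total $\lesssim 2\varepsilon\|A\|_F$), the division contributes a relative error $\varepsilon$ per off-diagonal entry (total $\lesssim \varepsilon\|A\|_F$), and the rank-one update touches each entry at most $N$ times with relative error $\varepsilon$ each time, which by the independence assumption accumulates in quadrature to the dominant $\sqrt{N}\,\varepsilon\,\|A\|_F$ term; the RANDSVD hypothesis enters only to argue that entries are initially (on average) equal and do not grow, so the per-entry bounds are uniform. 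You instead bound the entries of $\Delta A_1$ by applying Proposition \ref{prop:scalar} to the scalar products $A_{ij}=h_i^H h_j$, which is the error of \emph{forming} $A=H^H H$ --- an error source the paper deliberately keeps outside this theorem and analyzes immediately afterwards, obtaining the dominant term $\sqrt{M}\,\varepsilon\,\|A\|_F$, not $\sqrt{N}$. The dimension worry you raise in your last paragraph is therefore not a technicality to be patched: the accumulation genuinely runs over $M$ terms and the formation error genuinely scales as $\sqrt{M}$, so your route cannot yield \eqref{eq:DeltaA1}.

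The second gap is your appeal to Theorem \ref{th:CholeskyOld} to cover the Cholesky step's own backward error as being ``of the same order.'' That deterministic bound gives $\|\Delta A\|_F \leqslant (N+1)\sqrt{N}\,u\,\|A\|_F$ up to $O(u^2)$, which is a factor of about $N$ worse than \eqref{eq:DeltaA1}; the whole point of Theorem \ref{th:Cholesky} is to beat it by exploiting the independence of the round-offs in the rank-one updates. Folding in Theorem \ref{th:CholeskyOld} would destroy the bound you are trying to prove, and you supply no alternative argument for the algorithmic error. To repair the proposal you would need to drop the scalar-product route entirely and carry out the per-line accounting of the algorithm, which is where the $\sqrt{N}$ actually comes from.
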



First, we remind the pseudocode for Cholesky decomposition, which we will refer to.
\begin{algorithmic}[1]
\For{$j := 1$ \To $N$}
  \State $L_{jj} := \sqrt{A_{jj}}$ \label{line:Ljj}
  \State $L_{j+1:N,j} := A_{j+1:N,j} / L_{jj}$ \label{line:rescL}
  \State $A := A - L_{:,j} L_{:,j}^H$ \label{line:updateA}
\EndFor
\end{algorithmic}

The error of the square root in line \ref{line:Ljj} can be compensated by introducing a relative error of the order $2 \varepsilon$ into diagonal elements of $A$. This error, when summed over all diagonal elements, is bounded by $\lesssim 2 \varepsilon \left\| A \right\|_F$, which is lower than our estimate \eqref{eq:DeltaA1}. Since diagonal elements never increase during the decomposition and for RANDSVD $H$ are initially (on average) the same, the bound for different elements is the same.

Next, division in line \ref{line:rescL} can also be accounted for by introducing relative error $\varepsilon$ (error in $L_{jj}$ was already accounted for). This similarly leads to a bound $\lesssim \varepsilon \left\| A \right\|_F$, which is also lower than our estimate \eqref{eq:DeltaA1}. Here we use the fact that nondiagonal elements (on average) decrease after each Cholesky step and are initially (on average) the same for $H$ from RANDSVD.

Finally, line \ref{line:updateA} affects each element of $A$ at most $N$ times, and each time introduces relative error $\varepsilon$. Again, on average, the most error is introduced during the first step. Using the independence of round-offs, we get the total error as in \eqref{eq:DeltaA1}.

To turn error into random Gaussian matrices with independent entries, we use assumption 2 in section \ref{sec:assum} and extend both lower and upper triangular parts to full matrices $\Delta A_1$ and $\Delta A_1^H$, which leads to \eqref{eq:DeltaA}.

Note, however, that the error in Theorem \ref{th:Cholesky} is not the largest. A larger error comes from multiplication $A = H^H H$. From proposition \ref{prop:scalar} applied to each element of $H^H H$, we derive
\[
  \left\| \Delta \left( H^H H \right) \right\|_F \lesssim \sqrt{M} \varepsilon \left\| A \right\|_F + \left\| \Delta \hat A \right\|_F,
\]
where 
\[
\left| \Delta \hat A_{ij} \right| \lesssim \varepsilon \sqrt{A_{ii} A_{jj}},
\]
therefore
\[
\left\| \Delta \hat A \right\|_F \lesssim \sqrt{N} \varepsilon \left\| A \right\|_F
\]
and the dominant term is $\sqrt{M} \varepsilon \left\| A \right\|_F$.

\section{Effect of round-off error on the LS detector}

When considering the solution of the LS problem, condition numbers will appear. We remind their definition here.

\begin{definition}
\[
  {\rm cond}_2 \left( A \right) = \left\| A \right\|_2 \left\| A^+ \right\|_2.
\]
\[
  {\rm cond}_F \left( A \right) = \left\| A \right\|_F \left\| A^+ \right\|_F.
\]
They are connected as
\[
  {\rm cond}_2 \left( A \right) \leqslant {\rm cond}_F \left( A \right) \leqslant N {\rm cond}_2 \left( A \right).
\]
\end{definition}

Since in practice, it is common to see close to the exponential distribution of singular values of $H$, we usually have ${\rm cond}_F \left( H \right) \sim {\rm cond}_2 \left( H \right)$. Although ${\rm cond}_F$ is not a standard notation, we still use it for completeness (in addition to standard ${\rm cond}_2$), since it matches our bounds more closely. 

For vector $X$ we use an assumption $\left\| X \right\|_2 = 1$ so that we get a relative error. We directly substitute error for $\Delta A$ ($A = H^H H$) into $\Delta X \sim \left( A^{-1} - \tilde A^{-1} \right) H^H Y$, leading to (using proposition \ref{prop:gauss}):
\begin{equation}\label{eq:final}
\begin{aligned}
  \left\| \Delta X \right\|_F & \sim \left\| \left( A + \Delta A \right)^{-1} H^H Y - X \right\|_F \\
  & = \left\| \left( A + \Delta A \right)^{-1} H^H Y - A^{-1} H^H Y \right\|_F \\
  & \sim \left\| A^{-1} \Delta A A^{-1} H^H Y \right\|_F \\
  & = \left\| A^{-1} \Delta A A^{-1} H^H H X \right\|_F \\
  & \sim \frac{1}{N} \left\| A^{-1} \right\|_F \left\| \Delta A \right\|_F \left\| X \right\|_2 \\
  & \lesssim \frac{\sqrt{M}}{N} \varepsilon \left\| A^{-1} \right\|_F \left\| A \right\|_F \\
  & = \frac{\sqrt{M}}{N} \varepsilon \left\| \left(H^H H \right)^{-1} \right\|_F \left\| H^H H \right\|_F \\
  & = \frac{\sqrt{M}}{N} \varepsilon {\rm cond}_F \left( H^H H \right) 
  \leqslant \sqrt{M} \varepsilon {\rm cond}_2^2 \left( H \right),
\end{aligned}
\end{equation}
where we also used the Neumann series
\[
  \tilde A^{-1} = \left( A + \Delta A \right)^{-1} = A^{-1} \Delta A A^{-1} + \ldots
\]
Using Theorem \ref{th:CholeskyOld} would instead lead to an estimate at least $N$ times higher.

\section{Simulation results}

We performed numerical simulations for RANDSVD matrices with exponential distribution of singular values (default mode of the gallery('randsvd', ...) in Matlab) and compared the observed round-off errors of half precision arithmetic with the theoretical bound \eqref{eq:final}. In Figure \ref{Error_32x32} we plot the average error for randsvd matrices of size $M=N=32$ with its variance, and a theoretical upper bound. By definition, we bound not the worst case, but the average expected error, and, indeed, the theoretical orange curve always lies above but close to the blue curve, which shows the average error of the LS solution. In Figure \ref{Error_64x12_real} we plot a more realistic case $M = 64$ antennas and $N=12$ users. In addition to the distribution of error for randsvd matrices, the figure also shows the distribution for realistic channels, constructed with the Quadriga 2.0 generator \cite{Quadriga}. Here theoretical bound is also close, but larger than the average error. The fact that real channels show the same error distribution as the corresponding randsvd matrices also justifies our assumption of using randsvd ensemble to predict roundoff errors.


\begin{figure}[t!]
\centering
\includegraphics[width=1.00\columnwidth]{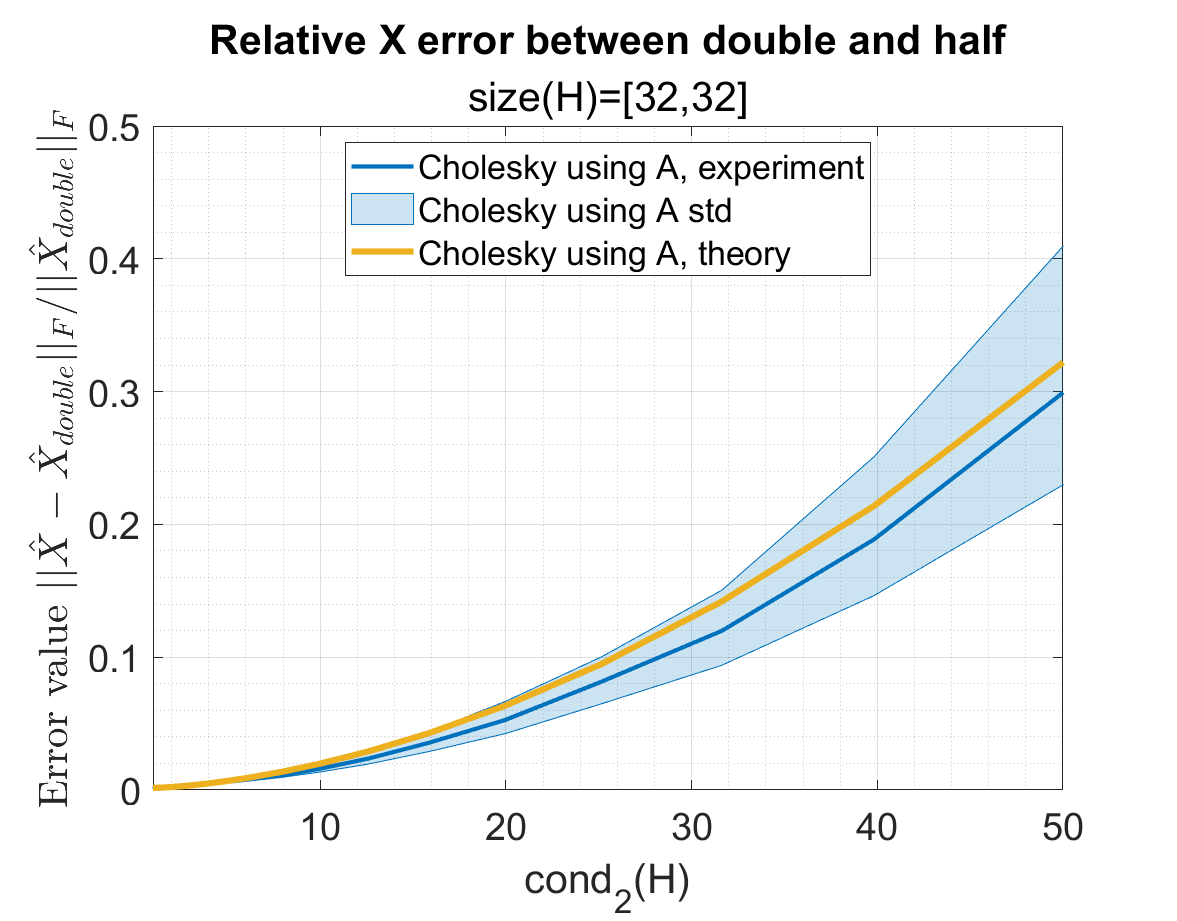}
\caption{Dependency of Cholesky round-off error on condition number for 32x32 matrix H. 
}
\label{Error_32x32}
\end{figure}

\begin{figure}[t!]
\centering
\includegraphics[width=1.00\columnwidth]{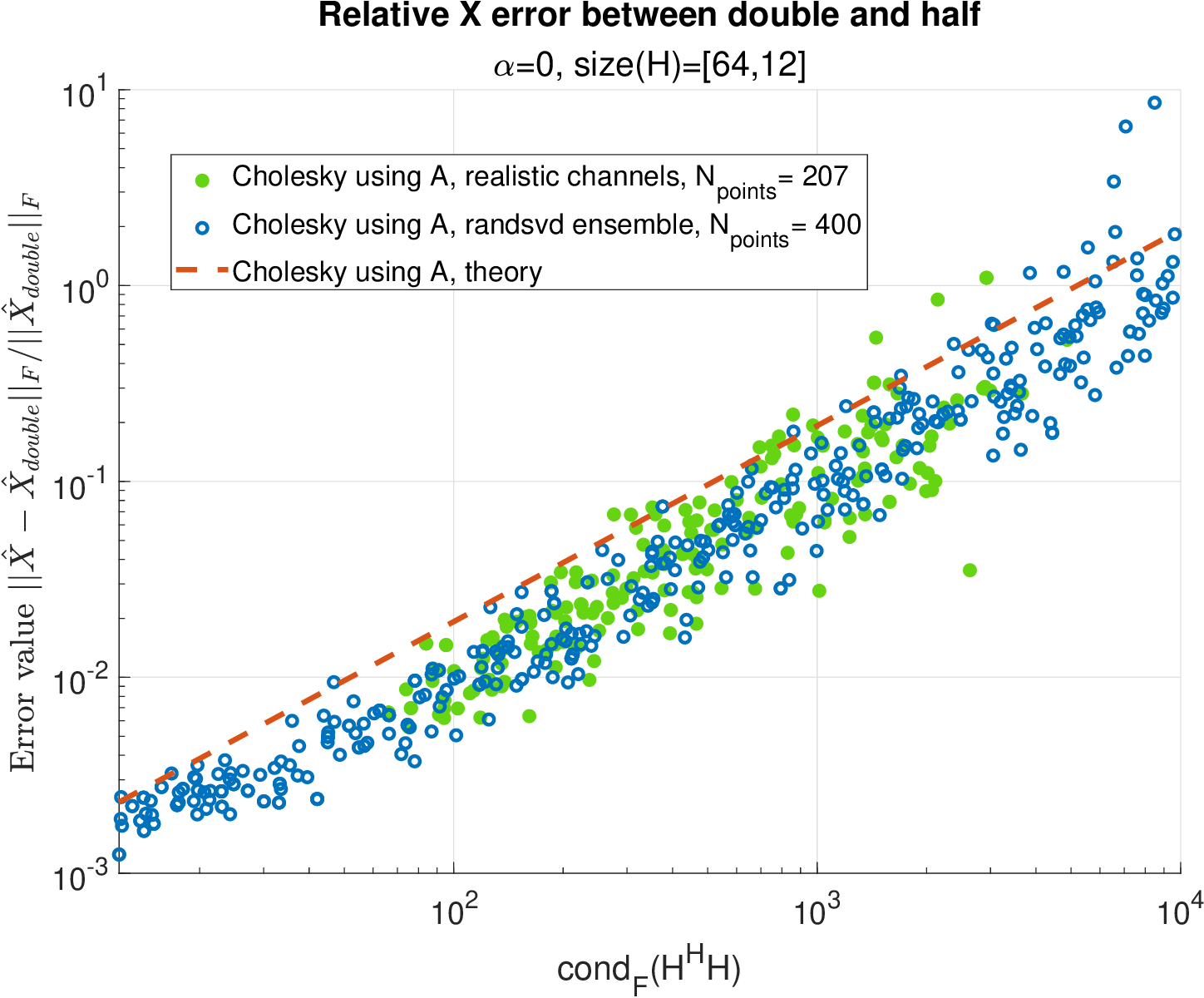}
\caption{Dependency of Cholesky round-off error on condition number for 64x12 matrices H from RANDSVD ensemble and QuaDRiGa simulations. 
}
\label{Error_64x12_real}
\end{figure}


\section{Conclusion}\label{sec:conclusion}

The proposed method allows to analyze the resulting error from the low-precision Cholesky decomposition in the linear LS solution. The numerical results demonstrate the theoretical error to be close to the observed one for half-precision format. Moreover, the bound exceeds practical errors by $<\!1\,$dB that allows choosing the required bitwidth with accuracy up to $<\!1$ bit. 
Therefore, when another error (induced by inaccurate $Y$ or $H$, dominates it, the bitwidth in the Cholesky decomposition can be lowered without performance loss. 

\section{Acknowledgment}

The authors acknowledge the use of Zhores \cite{Zhores2019} for obtaining the results presented in this paper.

\bibliographystyle{ieeetr}
\bibliography{main}

\end{document}